\def\ZZ{\mathbb{Z}}
\def\RR{\mathbb{R}}
\def\RR{\mathbb{R}}
\def\e{\bf{e}}
\def\b{\bf{b}}
\def\s{\bf{s}}
\def\m{\bf{m}}
\def\L{\Lambda}
\def\b{\bf{b}}
\def\s{\bf{s}}
\def\A{\bf{A}}
\def\H{\bf{H}}
\def\R{\bf{R}}
\def\G{\bf{G}}
\def\X{\bf{X}}
\def\B{\bf{B}}
\def\V{\bf{V}}
\def\x{\bf{x}}
\def\c{\bf{c}}
\def\IBPRE{\mathsf{IB\hbox{-}uPRE}}
\def\INDsCPA{\mathsf{IND\hbox{-}sID\hbox{-}CPA}}
\def\Exp{\mathsf{Exp}}
\def\Pr{\mathsf{Pr}}
\begin{document}
\title{Lattice-based Unidirectional IBPRE Secure in Standard Model}
\author{Priyanka Dutta\and Willy Susilo\and Dung Hoang Duong\and Joonsang Baek\and \\Partha Sarathi Roy}
\institute{Institute of Cybersecurity and Cryptology\\
School of Computing and Information Technology\\
University of Wollongong\\
	Northfields Avenue, Wollongong NSW 2522, Australia\\
\email{\{pdutta,wsusilo,hduong,baek,partha\}@uow.edu.au} 
}
\maketitle

\begin{abstract}
Proxy re-encryption (PRE) securely enables the re-encryption of ciphertexts from one key to another, without relying on trusted parties, i.e., it offers delegation of decryption rights. PRE allows a semi-trusted third party termed as a ``proxy" to securely divert encrypted files of user A (delegator) to user B (delegatee) without revealing any information about the underlying files to the proxy. To eliminate the necessity of having a costly certificate verification process, Green and Ateniese introduced an identity-based PRE (IB-PRE). The potential applicability of IB-PRE leads to intensive research from its first instantiation. Unfortunately, till today, there is no unidirectional IB-PRE secure in the standard model, which can withstand quantum attack. In this paper, we provide, for the first time, a concrete construction of unidirectional IB-PRE which is secure in standard model based on the hardness of learning with error problem. Our technique is to use the novel trapdoor delegation technique of Micciancio and Peikert. The way we use trapdoor delegation technique may prove useful for functionalities other than proxy re-encryption as well.
\end{abstract}

\section{Introduction}
Blaze, Bleumer and Strauss \cite{BBS98} introduced the concept of Proxy Re-encryption (PRE) towards an efficient solution that offers delegation of decryption rights without compromising privacy. PRE allows a semi-trusted third party, called a proxy, to securely divert encrypted files of one user (delegator) to another user (delegatee). The proxy, however, cannot learn the underlying message $m$, and thus both parties' privacy can be maintained. This primitive (and its variants) have various applications ranging from encrypted email forwarding \cite{BBS98}, securing distributed file systems \cite{AFGH07}, to digital rights management systems.% \cite{Smith05}. %We notice a real-world file system employing a PRE scheme by Toshiba Corporation \cite{MHS13}. 
In addition application-driven purposes, various works have shown connections between re-encryption with other cryptographic primitives, such as program obfuscation \cite{HRV07,CCV12,CCLNX14} and fully-homomorphic encryption \cite{CLNTV15}. Thus studies along this line are both important and interesting for theory and practice.

PRE systems are classified as unidirectional and bidirectional based on the direction of delegation. It is worth mentioning that the unidirectional constructions are much desirable because bidirectional construction easily implementable using a unidirectional one. Though the concept of PRE was initiated in \cite{BBS98}, the first unidirectional PRE proposed by Ateniese et al. in \cite{AFGH07}, where following desired properties of a PRE are listed: {\em Non-interactivity} ({\em re-encryption key}, $rk _{A \rightarrow B}$, can be generated by $A$ alone using $B$'s public key; no trusted authority is needed); {\em Proxy transparency} (neither the delegator nor the delegatees are aware of the presence of a proxy); {\em Key optimality} (the size of $B$'s secret key remains constant, regardless of how many delegations he accepts); {\em Collusion resilience} (it is computationally infeasible  for the coalition of the proxy and user $B$ to compute $A$’s secret key); {\em Non-transitivity} (it should be hard for the proxy to re-delegate the decryption right, namely to compute $rk _{A \rightarrow C}$ from $rk _{A \rightarrow B}$, $rk _{B \rightarrow C}$). To achieve the aforementioned properties (partially) with improved security guarantee, there are elegant followup works which can be found in \cite{CH07,HRV07,LV08,CCV12,CCLNX14}. For quantum-safe version of PRE, Gentry \cite{Gentry09} mentioned the feasibility of unidirectional PRE through fully homomorphic encryption scheme (FHE). However, FHE costs huge computation. Xagawa proposed construction of PRE in \cite{Xagawa10}, but the construction lacks concrete security analysis. Further development of lattice-based PRE can be found in \cite{KIRSH14,CCLNX14,NX15,FL19}. %In \cite{KIRSH14}, the first non-interactive CCA secure lattice-based PRE proposed. Unfortunately, there is some issue regarding security reduction, which is fixed by Fan et al. \cite{FL19}.

Certificate management problem is a crucial issue in the PKI based schemes. This crucial issue was addressed by Green et al. \cite{GA07} in the area of PRE. For lattice-based construction, Singh et al. \cite{SRB13} proposed a bidirectional identity-based PRE. However, it is required to use secret key of both delegator and delegatee to generate re-encryption key, which lacks one of the fundamental properties of PRE. Further, they proposed unidirectional identity-based PRE \cite{SRB14u}, termed as {\em $\IBPRE$}, secure in the random oracle model. However, the size of re-encrypted ciphertext blows up than the original encrypted one. Moreover, the schemes encrypt the message bit by bit. Later, there are some further attempts to construct lattice-based identity-based PRE, which are flawed\footnote{In \cite{HJGS18}, authors claimed to proof IND-ID-CPA, but provide the proof for IND-CPA. In \cite{yin17}, authors assumed a universally known entity ({$\G$} matrix; see section \ref{1.lattices}) as a secret entity.} \cite{HJGS18,yin17}.

\noindent{\bf Our Contribution and Technique:} It is an interesting open research problem to construct post-quantum secure $\IBPRE$ in the standard model. In this paper, we resolve this daunting task by constructing a concrete scheme based on the hardness of {\em learning with error} (LWE) problem. The proposed construction is capable of encrypting multi-bit message and enjoy the properties like non-interactivity, proxy transparency, key optimality, non-transitivity along with other properties follow generically from IB-PRE. To construct the $\IBPRE$, we start with the construction of the identity-based encryption scheme by Agrawal et al. \cite{ABB10-EuroCrypt}. In non-interactive $\IBPRE$, it is required to construct re-encryption key by the delegator alone. One of the feasible ways to adopt the non-interactive feature is to provide a trapdoor to the delegator as a secret key. But, this technique is not supported by the design of \cite{ABB10-EuroCrypt}. In \cite{ABB10-EuroCrypt}, the trapdoor is the master secret key and the secret key of user is sampled by the master secret key. We first trace the design of selective IBE, where the secret key of a user is also a trapdoor, by using the trapdoor delegation technique of \cite{MP12}. Then extend the design to incorporate re-encryption feature based on the encryption scheme of \cite{MP12}. Here, the secret key of a user is a tuple of trapdoor, where one is used for decryption and another one is used for re-encryption key (ReKey) generation. ReKey is generated as in \cite{KIRSH14,FL19} with a trick to resists proxy to get any information regarding the underlying message of the corresponding re-encrypted ciphertext.. The underlying IBE of the proposed $\IBPRE$ may prove useful to design expressive cryptographic primitives other than identity-based proxy re-encryption as well.

\section{Preliminaries}
%\subsection{Notations}
We denote the real numbers and the integers  by $\mathbb{R}, \mathbb{Z}$, respectively.
We denote column-vectors by lower-case bold letters (e.g. $\bf{b}$), so row-vectors are represented via transposition (e.g. ${\bf{b}}^t$). Matrices are denoted by upper-case bold letters and treat a matrix $ {\X} $ interchangeably with its ordered set $ \{{\bf{x}}_1,{\bf {x}}_2, \ldots\}$ of column vectors. We use ${\bf{I}}$ for the identity matrix and ${\bf{0}}$ for the zero matrix, where the dimension will be clear from context. We use $[ * | * ]$ to denote the concatenation of vectors or matrices.  A negligible function, denoted generically by $\negl$. We say that a probability is overwhelming if it is $1 - \negl$. The {\em statistical distance} between two distributions ${\bf{X}}$ and ${\bf{ Y}}$ over a countable domain $\Omega$ defined as $\frac 1{2} \sum_{w \in \Omega} | \Pr[{\bf{X}}=w]- \Pr[{\bf{Y}}=w]|.$ We say that a distribution over $\Omega$ is $\epsilon$-far if its statistical distance from the uniform distribution is at most $\epsilon$. Throughout the paper, $r = \omega (\sqrt{\log n})$ represents a fixed function which will be approximated by $\sqrt{ \ln(2n/\epsilon)/\pi}$.

\subsection{Lattices}
\label{1.lattices}
A $lattice ~\Lambda $ is a discrete additive subgroup of $\mathbb{R}^{m}$. Specially, a lattice $\Lambda$ in $\mathbb{R}^{m}$ with basis ${\B}=[{{\b}_1,\cdots,{\b}_n}]\in\mathbb{R}^{m\times n}$, where each ${\b}_i$ is written in column form, is defined as $\Lambda:=\left\{\sum_{i=1}^n{\b}_ix_i | x_i\in\ZZ~\forall i=1,\ldots,n \right\}\subseteq\mathbb{R}^m.$
We call $n$ the rank of $\L$ and if $n=m$ we say that $\L$ is a full rank lattice. The dual lattice $\Lambda ^{*}$  is the set of all vectors ${ \bf{y}} \in \mathbb{R}^{m} $ satisfying $\langle \bf{x},\bf{y} \rangle \in \mathbb{Z} $  for all vectors ${\bf{x}} \in \Lambda $. If $\B$ is a basis of an arbitrary lattice $\Lambda$, then $ {\B} ^{*} = {\B} ({\B}^t{\B})^{-1}$ is a basis for $\Lambda ^{*}$. For a full-rank lattice, ${\B} ^{*}= {\B}^{-t}$. %We refer to $ \tilde{\B}$ as a Gram-Schmidt orthogonalization of $\B$.

In this paper, we mainly consider full rank lattices containing $q\ZZ^m$, called $q$-ary lattices, defined as the following, for a given matrix ${\A}\in\ZZ_q^{n\times m}$ and ${\bf{u}}\in\ZZ_q^n$: $\Lambda ^{\perp} ({\A}) := \left\{ {\bf{z}} \in \mathbb{Z}^{m} : {\A} {\bf{z}} =0\!\!\!\mod q \right\}$; $\Lambda ({\A }^t) =  \left\{ {\bf{z}} \in \mathbb{Z}^{m} : \exists~ {\bf{s}} \in \mathbb{Z}_q^{n}  ~s.t.~ {\bf{z}} = { \A}^t {\bf{s}}\!\!\!\mod q \right\}$; $\Lambda _{\bf{u}} ^{\perp} ({\A}) := \left\{{\bf{ z}} \in \mathbb{Z}^{m} : {\A} {\bf{z}} ={\bf{u}}\!\!\!\mod q \right\} = \Lambda ^{\perp} ({\A}) + {\bf x} ~for~ {\bf x} \in \Lambda ^{\perp} ({\A})$. Note that, $\Lambda ^{\perp} (\A)$ and $\Lambda ({\A}^t)$ are dual lattices, up to a $q$ scaling factor: $ q \Lambda ^{\perp} ({\A})^{*} = \Lambda ({\A} ^t)$, and vice-versa. Sometimes we consider the non-integral, $1$-$ary$  lattice $ \frac 1{q}\Lambda ({\A} ^t)= \Lambda ^{\perp} ({\A})^{*} \supseteq \mathbb{Z} ^{m}$.

\noindent{\bf Gaussian on Lattices:} Let $\L\subseteq\ZZ^m$ be a lattice. For a vector ${\bf c}\in{\RR}^m$ and a positive parameter $s\in\RR$, define: $\rho_{{\c}, s}({\x})=\exp\left(\pi\frac{\|{\x}-{\c}\|^2}{s^2}\right)\text{and} ~\rho_{{\c}, s}(\L)=\sum_{\x\in\L}\rho_{{\c}, s}(\x).$
The discrete Gaussian distribution over $\L$ with center $\c$ and parameter $\sigma$ is ${\cal{D}}_{{\L},{\c}, s}({\bf{y}})=\frac{\rho_{{\c}, s}(\bf{y})}{\rho_{{\c}, s}(\L)}, \forall {\bf y}\in\L$.

\noindent{\bf Hard problems on Lattices:} There are two lattice-based one-way functions associated with matrix ${\A }\in \mathbb{Z}_q ^{n\times m}$ for $m=poly(n)$:
  \begin{itemize}
 \item $g_{\A}({\bf{e}},{\bf{s}}) = {\bf{s}}^t {\A} + {\bf{e}}^t \mod q$ for $ {\bf{s}} \in \mathbb{Z}_q^{n} $ and a Gaussian ${\bf{e}} \in \mathbb{Z}^{m}$ and
   $ f_{\A}({\bf{x}}) = {\A \bf{x}} \mod q$, for ${\bf{x}} \in \mathbb{Z}^{m};$
 
  \item The Learning With Errors (LWE) problem was introduced in \cite{Regev05}. The problem to invert $g_{\A}({\bf{e}},{\bf{s}})$, where $ {\bf{e}} \leftarrow D_{{\mathbb{Z}^{m}}, \alpha q}$ is known as search-LWE$_{q, n, m, \alpha} $ problem and is as hard as quantumly solving Shortest Independent Vector Problem (SIVP) on $n$-dimensional lattices. The decisional- LWE$_{q, n, m, \alpha}$ problem asks to distinguish the output of $g_{\A}$ from uniform.

  \item The Small Integer Solution (SIS) problem was first suggested to be hard on average by Ajtai \cite{ajtai96} and then formalized by Micciancio and Regev \cite{MR04}. Finding a non-zero short preimage $\bf{x}'$ such that $f_{\A}(\bf{x}')= \bf{0}$, with $\norm{\bf{x}'} \leq \beta$, is an instantiation of the SIS$_{q, n, m, \beta}$ problem. It is known to be as hard as certain worst-case problems (e.g. SIVP) in standard lattices \cite{Ajtai99,MR04,GPV08,MP13}.
 \end{itemize}

 \noindent{\bf Trapdoors for Lattices:} Here, we briefly describe the main results of \cite{MP12} and it's generalized version from \cite{KIRSH14}: the definition of $\G$-trapdoor, the algorithms $\bf{Invert^{\mathcal{O}}}$, $\bf{Sample}^{\mathcal{O}}$ and ${\bf{DelTrap}} ^{\mathcal{O}}$.
 
 A $\G$-trapdoor is a transformation (represented by a matrix $\R$) from a public matrix $\A$ to a special matrix $\G$ which is called as gadget matrix. The formal definitions as follows:
 
 \begin{definition} [\cite{MP12}]
 Let ${\A }\in \mathbb{Z}_q^{n\times m}$ and ${\G }\in \mathbb{Z}_q^{n\times w}$ be matrices with $m \geq w \geq n$. A $\G$-trapdoor for $\A$ is a matrix ${\R} \in \mathbb{Z}^{(m-w)\times w}$ such that $\A  \left[ \begin{array}{c} \R \\ \bf{I} \end{array}\right ] = \H \G$, for some invertible matrix ${\H} \in \mathbb{Z}_q^{n \times n}$. We refer to $\H$ as the tag or label of the trapdoor. %The quality of the trapdoor is measured by its largest singular value $s_1(\R)$. 
 \end{definition}

 \begin{definition}[\cite{KIRSH14}]
 The generalized version of a $\G$-trapdoor $:$\\
 Let ${\A }= \left[ \begin{array}{c|c|c|c} {\A}_0 & {\A}_1 & \cdots & {\A}_{k-1} \end{array}\right ] \in \mathbb{Z}_q^{n \times m}$ for $ k \geq 2$, and ${\A}_0 \in \mathbb{Z}_q^{n \times \bar{m}}, {\A}_1, \ldots, {\A}_{k-1}  \in \mathbb{Z}_q^{n \times w}$ with $ \bar{m} \geq w \geq n $ and $ m= \bar{m} +(k-1) \cdot w$ $(typically,~ w= n \lceil \log q \rceil)$. A $\G$-trapdoor for $\A$ is a sequence of matrices ${\R}= \left[\begin{array}{c|c|c|c} {\R}_1 &{ \R}_2 & \cdots &{ \R}_{k-1} \end{array}\right]\in \mathbb{Z}_q^{\bar{m} \times (k-1)w}$ such that $:$ 

$$\left[ \begin{array}{c|c|c|c} {\A}_0 & {\A}_1 & \cdots & {\A}_{k-1} \end{array}\right ] \left[ \begin{array}{cccc} {\R}_1 & {\R}_2 & \cdots & {\R}_{k-1} \\ {\bf{I}} & {\bf{0}} & \cdots & {\bf{0}} \\ \vdots & \vdots & \ddots & \vdots \\ {\bf{0}}& {\bf{0}} & \cdots & {\bf{I}}\end{array}\right ]  = \left[ \begin{array}{c|c|c|c} {{\H}_1}{\G}&{ {\H}_2}{\G} & \cdots & {{\H}_{k-1}}{\G} \end{array}\right ],$$

for invertible matrices ${{\H}_i} \in \mathbb{Z}_q^{n \times n}$ and a fixed ${\G}\in \mathbb{Z}_q^{n\times w}$.
 \end{definition}
 
 \paragraph{${\bf{Invert}^{\mathcal{O}}} ({\R}, {\A}, {\b}, {\H}_{i})$}\cite{KIRSH14}: On input a vector ${\b}^t = {\s}^t {\A} +{\e}^t$, a matrix\\
 $\A = \left[ \begin{array}{c|c|c|c} {{\A}_0} & -{{\A}_0}{\R}_1+{\H_1\G } & \cdots & -{{\A}_0}{\R}_{k-1}+{\H}_{k-1}\G \end{array}\right]$  and its corresponding $\G$-trapdoor ${\R}= \left[\begin{array}{c|c|c|c} {\R}_1 &{ \R}_2 & \cdots &{ \R}_{k-1} \end{array}\right]$ with invertible tag ${\H}_i$, the algorithm first computes 
 $${\b}' = {\b}^t\left[ \begin{array}{cccc} {\R}_1 & {\R}_2 & \cdots & {\R}_{k-1} \\ {\bf{I}} & {\bf{0}} & \cdots & {\bf{0}} \\ \vdots & \vdots & \ddots & \vdots \\ {\bf{0}}& {\bf{0}} & \cdots & {\bf{I}}\end{array}\right ]$$ and then run the inverting oracle  $\mathcal{O}(\b')$ for $\G$ to get $(\s',\e')$. The algorithm outputs ${\s} = {\H}_i^{-1} \s' $ and ${\e} = {\b} - {\A}^t {\s}$. Note that, ${\bf{Invert}^{\mathcal{O}}}$ produces correct output if ${\e} \in \mathcal{P}_{1/2}(q \cdot \mathbf{B}^{-t})$, where $\mathbf{B}$ is a basis of $\L^{\perp} ({\G})$; cf. \cite[Theorem 5.4]{MP12}.

\paragraph{${\bf{Sample}}^{\mathcal{O}} ({\R,\A,{\H}}, {\bf{u}}, s)$} \cite{MP12}: On input  $({\R,\A',\H}, {\bf{u}}, s)$, the algorithm construct $ \A = \left[ \begin{array}{c|c} {\A'} & -{\A}'\R+\H\G \end{array}\right]$, where $\R$ is the $\G$-trapdoor for matrix $\A$ with invertible tag $\H$ and ${\bf{u}} \in \mathbb{Z}_q^{n}$.% and here $s = \sqrt {s_1({\R})^{2}  +1}\cdot \sqrt {s_1({\SS}_{\G})+ 2} \cdot r$. 
The algorithm outputs, using an oracle $\mathcal{O}$ for Gaussian sampling over a desired coset $\Lambda_{\bf{v}} ^{\perp} (\G)$, a vector drawn from a distribution within negligible statistical distance of $D _{\Lambda_{\bf{u}} ^{\perp} ({\A}), s}$. To sample a Gaussian vector ${\bf x}\in \mathbb{Z}_q^{m}$ for ${\A }= \left[ \begin{array}{c|c|c|c} {\A}_0 & {\A}_1 & \cdots & {\A}_{k-1} \end{array}\right ] \in \mathbb{Z}_q^{n \times m}$ with the generalized trapdoor  ${\R}= \left[\begin{array}{c|c|c|c} {\R}_1 &{ \R}_2 & \cdots &{ \R}_{k-1} \end{array}\right]$ and $k-1$ invertible ${\H}_i$'s given a coset ${\bf{u}} \in \mathbb{Z}_q^{n}$, use generalized version of ${\bf{Sample}}^{\mathcal{O}}$ from \cite{KIRSH14}. 
%Output ${\bf x} = {\bf p} + \left[ \begin{array}{cccc} {\R}_1 & {\R}_2 & \cdots & {\R}_{k-1} \\ {\bf{I}} & {\bf{0}} & \cdots & {\bf{0}} \\ \vdots & \vdots & \ddots & \vdots \\ {\bf{0}}& {\bf{0}} & \cdots & {\bf{I}}\end{array}\right ] \cdot {\bf z} \in \mathbb{Z}^m$ 

%\paragraph{${\bf{DelTrap}}^{\mathcal{O}}({\A}' = \left[ \begin{array}{c|c} {\A} & {\A}_1 \end{array}\right], {\R},  {\H}', s)$\cite{MP12}: On input an oracle $\mathcal{O}$ for discrete Gaussian sampling over cosets of $ \Lambda = \Lambda^{\perp}(\A)$ with parameter $s$, an extended matrix $\A'$ of $\A$, an invertible matrix $\H'$, the algorithm will sample (using $\mathcal{O}$) each column of $\R'$ independently from a discrete Gaussian with parameter $s$ over the appropriate coset of $\Lambda ^{\perp}(\A)$, so that ${\A}{\R' }= {\H'} {\G} - {\A}_{1}$. The algorithm outputs a trapdoor $\R'$ for $\A'$ with tag ${\H}'$.

 \paragraph{${\bf{DelTrap}}^{\mathcal{O}}$}$({\A}' = \left[ \begin{array}{c|c} {\A} & {\A}_1 \end{array}\right], {\R},  {\H}', s)$\cite{MP12}: On input an oracle $\mathcal{O}$ for discrete Gaussian sampling over cosets of $ \Lambda = \Lambda^{\perp}(\A)$ with parameter $s$, an extended matrix $\A'$ of $\A$, an invertible matrix $\H'$, the algorithm will sample (using $\mathcal{O}$) each column of $\R'$ independently from a discrete Gaussian with parameter $s$ over the appropriate coset of $\Lambda ^{\perp}(\A)$, so that ${\A}{\R' }= {\H'} {\G} - {\A}_{1}$. The algorithm outputs a trapdoor $\R'$ for $\A'$ with tag ${\H}'$.

\subsection{ Identity-Based Unidirectional Proxy Re-Encryption}
\begin{definition} [Identity-Based Unidirectional Proxy ReEncryption (IB-uPRE) \cite{GA07}]
A unidirectional Identity-Based Proxy Re-Encryption $(\IBPRE)$ scheme is a tuple of algorithms  $({\bf{Set Up, Extract, ReKeyGen, Enc, ReEnc, Dec}}):$

\begin{itemize}

\item$(PP,msk) \longleftarrow {\bf{SetUp}}(1^n):$ On input the security parameter $1^{n}$, the $\bf{set up}$ algorithm outputs $PP, msk$.

\item $ sk_{id} \longleftarrow {\bf{Extract}}(PP, msk, id):$ On input an identity $id$, public parameter PP, master secret key, output the secret key $sk_{id}$ for $id$.

\item $rk _{i \rightarrow j} \longleftarrow\mathbf{ReKeyGen}(PP, sk_{id_{i}}, {id_i}, {id_j}):$ On input a public parameter $PP$, secret key $sk_{id_{i}}$ of a delegator $i$, and $id_i, id_j$, output a unidirectional re-encryption key $rk _{i \rightarrow j} $.

\item $ct \longleftarrow {\bf{Enc}}(PP, id, m):$ On input an identity $id$, public parameter $PP$ and a plaintext $m \in \mathcal{M}$, output a ciphertext $ct$ under the specified identity $id$.

\item $ct' \longleftarrow {\bf{ReEnc}}(PP,rk _{i \rightarrow j}, ct):$ On input a ciphertext $ct$ under the identity $i$  and a re-encryption key $rk _{i \rightarrow j}$, output a ciphertext $ct'$ under the identity $j$.

\item $m \longleftarrow {\bf{Dec}}(PP,sk_{id_{i}}, ct):$ On input the ciphertext $ct$ under the identity $i$ and secret key $sk_{id_{i}}$ of $i$, the algorithm outputs a plaintext $m$ or the error symbol $\bot$.

\end{itemize} 
\end{definition}

An Identity-Based Proxy Re-Encryption scheme is called  single-hop if a ciphertext can be re-encrypted only once. In a multi-hop setting   proxy can apply further re-encryptions to already re-encrypted ciphertext. 

\begin{definition} [Single-hop IB-uPRE Correctness]
A single-hop unidirectional Identity-Based Proxy Re-Encryption scheme $({\bf{Set Up, Extract, ReKeyGen, Enc, ReEnc, Dec}})$ decrypts correctly for the plaintext space $\mathcal{M}$ if $:$
\begin{itemize}
\item For all $ sk_{id}$ output by ${\bf{Extract}}$ under $id$ and for all $m \in \mathcal{M}$,\\ it holds that $ {\bf{Dec}}(PP, sk_{id}, {\bf{Enc}}(PP, id, m))  = m$.

\item For any re-encryption key $rk _{i \rightarrow j}$ output by $\mathbf{ReKeyGen} (PP, sk_{id_{i}}, {id_i}, {id_j})$ and any $ct={\bf{Enc}}(PP, id_i,m)$ it holds that  ${\bf{Dec}}(PP, sk_{id_{j}}, {\bf{ReEnc}}(PP,rk _{i \rightarrow j}, ct)) = m$.

\end{itemize}
\end{definition}

\paragraph{{\bf Security Game of Unidirectional Selective Identity-Based Proxy Re-Encryption Scheme against Chosen Plaintext Attack (IND-sID-CPA)}}: $\\$
To describe the security model we first classify all of the users into honest $(HU)$ and corrupted $(CU)$. In the honest case an adversary does not know secret key, whereas for a corrupted user the adversary has secret key.
Let $\mathcal{A}$ be the PPT adversary and $\Pi = ({\bf{Set Up, Extract, ReKeyGen}},$ ${\bf{Enc, ReEnc, Dec}})$ be an $\IBPRE$ scheme with a plaintext space $\mathcal{M}$ and a ciphertext space $\mathcal{C}$. Let $id^{*}(\in HU)$ be the target user. Security game is defined according to the following game $\Exp_{\mathcal{A}}^ {\INDsCPA} (1^{n}):$

\begin{enumerate}
\item $\bf{Set Up}$: The challenger runs ${\bf{Set Up}}(1^{n})$ to get ($PP, msk)$ and give $PP$ to $\mathcal{A}$. 
\item{{\bf Phase 1:}} The adversary $\mathcal{A}$ may make quires polynomially many times in any order to the following oracles:
\begin{itemize}
\item $\mathcal{O}^{\bf{Extract}}$: an oracle that on input $id \in CU$, output $sk_{id}$; Otherwise, output $\bot$.

\item $\mathcal{O}^{\bf{ReKeyGen}}$: an oracle that on input the identities of $i$-th and $j$-th users: if $id_i \in HU \setminus \{id^{*}\}$, $id_j \in HU$ or $id_i, id_j \in CU$ or $id_i \in CU, id_j \in HU$, output $rk _{i \rightarrow j}$; otherwise, output $\bot$.

\item $\mathcal{O}^{\bf{ReEnc}}$: an oracle that on input the identities of $i, j$-th users and ciphertext of $i$-th user: if $id_i, id_j \in HU$ or $id_i, id_j \in CU$ or $id_i \in CU, id_j \in HU $output re-encrypted ciphertext; otherwise, output $\bot$.
\end{itemize}

\item $\bf{Challenge}$: $\mathcal{A}$ outputs two messages $m_0, m_1 \in \mathcal{M}$ and is given a challenge ciphertext $ct_b \longleftarrow {\bf{Enc}}(PP, id ^{*},m_b)$ for either $b=0$ or $b=1$. 

\item{{\bf Phase 2:}} After receiving the challenge ciphertext, $\mathcal{A}$ continues to have access to the $\mathcal{O}^{\bf{Extract}}$, $\mathcal{O}^{\bf{ReKeyGen}}$ and $\mathcal{O}^{\bf{ReEnc}}$ oracle as in {\bf Phase 1}.

\item $\mathcal{O}^{\bf{Decision}}$: On input $b'$ from adversary  $\mathcal{A}$, this oracle outputs $1$ if $b = b'$ and $0$ otherwise. 
\end{enumerate}

\noindent The advantage of an adversary in the above experiment $\Exp_{\mathcal{A}}^ {\INDsCPA}$$(1^{n})$ is defined as $|\Pr[b' = b]-\frac1{2} |$.

\begin{definition}
\label{def:security}
An $\IBPRE$ scheme is $\INDsCPA$ secure if all PPT adversaries $\mathcal{A}$ have at most a negligible advantage in experiment  $\Exp_{\mathcal{A}}^ {\INDsCPA}(1^{n})$.
\end{definition}

\begin{remark}
In \cite{GA07}, $\bf{ReKeyGen}$ query is allowed from $id^{*}$ to $HU$ to make the $\IBPRE$ collusion resilient (coalition of malicious proxy and delegetee to compute delegator’s secret key). Here, we have blocked $\bf{ReKeyGen}$ query from $id^{*}$ to $HU$ and the proposed $\IBPRE$ scheme is not claimed to be collusion resilient.
\end{remark}

\section{Single-hop Identity-Based Unidirectional Proxy Re-Encryption Scheme (IB-uPRE)}
\subsection{Construction of Single-hop IB-uPRE}\label{sec:construction}
In this section, we present our construction of single-hop $\IBPRE$. We set the parameters as the following.
\begin{itemize}
\item $\mathbf{G}\in\mathbb{Z}_q^{n \times nk}$ is a gadget matrix for large enough prime power $ q = p^ {e}  = poly (n)$ and $k = O (\log q ) = O (\log n) $, so there are efficient algorithms to invert $ g_\mathbf{G} $ and to sample for $ f_\mathbf{G}$. 

\item $\bar{m} = O(nk) $ and  the Gaussian $ \mathcal{D} = D_{\mathbb{Z}, r}^{\bar{m}\times nk }$, so that $( \bar{\A}, \bar{\A} \R)$ is negl(n)-far from uniform for $\bar{\A}$.

 \item the LWE error rate $\alpha $  for $\IBPRE$ should satisfy $1 / \alpha = O (nk)^{3} \cdot r^3$. 
\end{itemize}

To start out, we first recall encoding techniques from \cite{MP12,ABB10-EuroCrypt}.
\begin{itemize}
\item{\bf Message Encoding:} In the proposed construction, message space is $\mathcal{M} = \{0, 1\}^ {nk}$. $\mathcal{M}$ map bijectively to the cosets of $\Lambda / 2 \Lambda $ for $ \Lambda = \Lambda(\mathbf{G}^{t})$ by some function $ encode$  that is efficient to evaluate and invert. In particular, letting $ \mathbf{E} \in\mathbb{Z}^{nk \times nk} $ be any basis of $\Lambda$, we can map ${\bf{m}}\in\{0, 1\}^ {nk}$ to $encode({\bf{m}}) $= $\mathbf{E}{\bf{m}} \in \mathbb{Z}^{nk}$ \cite{MP12}.

\item{\bf Encoding of Identity:} In the following construction, we use {\em full-rank difference} map (FRD) as in \cite{ABB10-EuroCrypt}. FRD: $\mathbb{Z}_q^{n} \rightarrow \mathbb{Z}_q^{n\times n}$; $id \mapsto {\H}_{id}$. We assume identities are non-zero elements in $ \mathbb{Z}_q^{n} $. The set of identities can be expanded to $\{0,1\}^{*}$ by hashing identities into $ \mathbb{Z}_q^{n} $ using a collision resistant hash. FRD satisfies the following properties: 1. $\forall ~ distinct ~id_1, id_2 \in  \mathbb{Z}_q^{n}$, the matrix ${\H}_{id_1} - {\H}_{id_2} \in \mathbb{Z}_q^{n\times n}$ is full rank; 2. $\forall ~ id \in  \mathbb{Z}_q^{n} \setminus \{{\bf 0}\}$, the matrix ${\H}_{id} \in \mathbb{Z}_q^{n\times n}$ is full rank; 3. FRD is computable in polynomial time (in $n \log q$).
\end{itemize}

The proposed $\IBPRE$ consists of the following algorithms: 

\paragraph{$\mathbf{SetUp}(1^n):$} On input a security parameter $n$, do:
\begin{enumerate}
\item Choose $\bar{\A} \leftarrow\mathbb{Z}_q^{n \times \bar{m}} $, $\R \leftarrow \mathcal{D}$, and set $\bar{\A}' =  -\bar{\A}\R $ $\in \mathbb{Z}_q^{n \times nk}$.
%where $ m = \bar{m} + nk $.
\item Choose four invertible matrices ${\H}_1, {\H}_2, {\H}_3, {\H}_4$ at uniformly random from $\mathbb{Z}_q^{n\times n}$. 
\item Choose two random matrices ${\A}_1, {\A}_2$ from $ \mathbb{Z}_q^{n \times nk}$.
%for which ${\R}$ is a trapdoor with tag $0$. 
\item Output $PP= (\bar{\A}, \bar{\A}', {\A}_1, {\A}_2, {\H}_1, {\H}_2, {\H}_3, {\H}_4, \G)$ and the master secret key is $msk = \R$.
\end{enumerate}

\paragraph{$\mathbf{Extract}( PP, msk, id):$} On input a public parameter $PP$, master secret key $msk$ and the identity of $i$-th user $id_i$, do:
\begin{enumerate}
\item Construct ${\tilde{\A}}_{i} = \left [ \begin{array}{c  |  r} \bar{\A}  &  \bar{\A}' +$${\H}_{id{_i}}$$\G$$\end{array} \right ]  =\left [ \begin{array}{c  |  r} \bar{\A}  &  -\bar{\A} \R +$${\H}_{id{_i}}$$\G$$\end{array} \right ]  $ $\in\mathbb{Z}_q^{n \times m}$, where $ m = \bar{m} + nk $. So, $\R$ is a trapdoor of ${ \tilde{\A}}_{i}$ with tag ${\H}_{id{_i}}$.

\item 
\begin{itemize}
\item Construct ${\A}_{i1} = {\A}_1+ {\H}_3 {\H}_{id_i} {\G}  \in \mathbb{Z}_q^{n \times nk}$ and set ${\A}_{i1}' = \left [ \begin{array}{c  |  c} {\tilde{\A}}_{i} &  {\A}_{i1} \end{array} \right ] \in\mathbb{Z}_q^{n \times (m+ nk)}$.

\item Call  the algorithm $\mathbf{DelTrap}^{\mathcal{O}}({{\A}_{i1}', \R}, {\H}_1, s)$ to get a trapdoor ${\R}_{i1}$ $\in\mathbb{Z}^{m \times nk} $ for ${\A}_{i1}' $ with tag ${\H}_1\in \mathbb{Z}_q^{n \times n} $, where $ s\geq \eta_{\epsilon}(\Lambda ^{\bot} ({\tilde{\A}}_{i}))$, so that  ${\tilde{\A}}_{i}{\R}_{i1} = {\H}_1{\G} - {\A}_{i1}$.
\end{itemize}

\item
\begin{itemize}
\item Construct ${\A}_{i2} = {\A}_2+ {\H}_4 {\H}_{id_i} {\G}  \in \mathbb{Z}_q^{n \times nk}$ and set ${\A}_{i2}' = \left [ \begin{array}{c  |  c}{ \tilde{\A}}_{i} &  {\A}_{i2} \end{array} \right ] \in\mathbb{Z}_q^{n \times (m+ nk)}$.

\item Call  the algorithm $\mathbf{DelTrap}^{\mathcal{O}}({\A}_{i2}', {\R}, {\H}_2, s)$ to get a trapdoor ${\R}_{i2} \in\mathbb{Z}^{m \times nk}$ for ${\A}_{i2}'$ with tag ${\H}_2 \in\mathbb{Z}_q^{n \times n}$, so that ${\tilde{\A}}_{i}{\R}_{i2} = {\H}_2 {\G} - {\A}_{i2}$.
\end{itemize}

%\item Choose an invertible matrix ${\H} \leftarrow \mathbb{Z}_{q}^{n \times n}$.

%\item Set the public key as $pk_{id_{i}} =(\left [ \begin{array}{c  |  c | c } {\tilde{\A}}_{i} & { \tilde{\A}}_{i1}  & {\tilde{ \A}}_{i2}\end{array} \right ] \in \mathbb{Z}_q^{n \times (m+ 2nk)} , \H)$

Output the secret key as $sk_{id_{i}}=  \left [ \begin{array}{c  |  c  } {\R}_{i1} &  {\R}_{i2} \end{array} \right ] \in \mathbb{Z}^{m \times 2nk}$. 
Notice that, 
$$\left [ \begin{array}{c  |  c  | c}  {\tilde{\A}}_{i} &  {\A}_{i1}  &  {\A}_{i2}\end{array} \right ] \left [ \begin{array}{c c  } {\R}_{i1} &  {\R}_{i2} \\ \mathbf{I} & \mathbf{0}\\ \mathbf{0} & \mathbf{I} \end{array} \right ] = \left [ \begin{array}{c  |  c  } {\H}_1{\G} & {\H}_2{\G}\end{array} \right ].$$
%So,$ \left [ \begin{array}{c  |  c  } {\R}_{i1} &  {\R}_{i2} \end{array} \right] $  is a $\G $-trapdoor for $ \left [ \begin{array}{c  |  c  | c}  {\tilde{\A}}_{i} &  {\A}_{i1}  &  {\A}_{i2}\end{array} \right ]$ with tags $\mathbf{I}$, ${\H}_{id{_i}}$.
\end{enumerate}

\paragraph{$\mathbf{Enc}(PP, id_{i}, {\m} \in \{0,1\}^{nk}):$} On input a public parameter $PP$, the identity of $i$-th user $id_i$ and message ${\m} \in \{0,1\}^{nk}$, do:
\begin{enumerate} 
\item Construct ${\tilde{\A}}_{i} =\left [ \begin{array}{c  |  r} \bar{\A}  &  -\bar{\A} \R +$${\H}_{id{_i}}$$\G$$\end{array} \right ]  $ $\in\mathbb{Z}_q^{n \times m}$. 
\item Construct  ${\A}_{i1}, {\A}_{i2}$ for $id_{i}$ same as in $\mathbf{Extract}$ algorithm and set  $ {\A}_i$ = $\left [ \begin{array}{c  |  c  | c} { \tilde{\A}}_{i} & {\A}_{i1} & {\A}_{i2}  \end{array} \right ]$.
 \item Choose a uniformly random  ${\bf s} \leftarrow \mathbb{Z}_q^{n}$.
 \item Sample error vectors $ \bar{{\e}_{0}}  \leftarrow  D_{\mathbb{Z},\alpha q}^{\bar{m}}$   and    $ {{\e}_{0}'},   {\e}_{1}, {\e}_{2} \leftarrow D_{\mathbb{Z},s'}^{nk}$, where $s'^{2} = (\| {\bar \e}_{0} \| ^ {2} + \bar {m} (\alpha q)^{2}) r^2$. Let the error vector $ {\e} = ({\e}_{0},  {\e}_{1},  {\e}_{2}) \in{\mathbb{Z} ^{\bar{m}+{nk}} }\times{\mathbb{Z}^{nk}}\times{\mathbb{Z} ^{nk}}$, where ${\e}_{0} = (\bar{\e}_{0} , {\e}_{0}') \in{\mathbb{Z} ^{\bar{m}}}\times{\mathbb{Z}^{nk}}$.   
\item Compute $ {\b} ^t = ( {\b}_{0},  {\b}_{1},  {\b}_{2}) = 2 ({\bf s}^t {\A}_i \mod q) + {\e}^t + (\mathbf{0},  \mathbf{0}, encode({\m})^t) \mod 2q$, where the first zero vector has dimension $\bar{m} +nk$, the second has dimension $nk$ and ${\b}_{0} = ({\bar\b}_{0}, {\b}_{0}')$.
\item Output the ciphertext $ct = {\b} \in\mathbb{Z}_{2q}^{\bar{m} + 3nk}$.
\end{enumerate}

\paragraph{$\mathbf{Dec}(PP, sk_{id_{i}}, ct):$} On input a public parameter $PP$, the secret key of $i$-th user $sk_{id_{i}}$ and ciphertext $ct$, do:
\begin{enumerate}
\item If $ct$ has invalid form or ${\H}_{id{_i}} = \mathbf{0}$, output $\bot$. Otherwise,
\begin{itemize} 
\item Construct ${\tilde{\A}}_{i} =\left [ \begin{array}{c  |  r} \bar{\A}  &  -\bar{\A} \R +$${\H}_{id{_i}}$$\G$$\end{array} \right ]  $ $\in\mathbb{Z}_q^{n \times m}$. 
\item Construct  ${\A}_{i1}, {\A}_{i2}$ for $id_{i}$ same as in $\mathbf{Extract}$ algorithm and set ${\A}_i$ = $\left [ \begin{array}{c  |  c  | c} { \tilde{\A}}_{i} & {\A}_{i1} & {\A}_{i2}  \end{array} \right ]$.
\end{itemize}
\item Call $\mathbf{Invert^{\mathcal{O}}}(\left [ \begin{array}{c  |  c  } {\R}_{i1} &  {\R}_{i2} \end{array} \right ],  {\A}_i, {\b}, {\H}_2)$  to get values ${\bf z} \in\mathbb{Z}_q^{n}$  and ${\e} = ({\e}_{0}, {\e}_{1},  {\e}_{2}) \in{\mathbb{Z} ^{\bar{m}+{nk}} }\times{\mathbb{Z}^{nk}}\times{\mathbb{Z} ^{nk}}$,  where ${\e}_{0} = (\bar{\e}_{0}, {\e}_{0}') \in{\mathbb{Z} ^{\bar{m}}}\times{\mathbb{Z}^{nk}}$ for which 
${\b}^t = {\bf{z}}^t {\A}_i + {\e}^t  \mod q$. If the call to $\mathbf{Invert}$ fails for any reason, output $\bot$.

\item If $\|{\bar\e}_{0}\| \geq \alpha q \sqrt{\bar{m}}$ or  $\|{\e}_{0}'\| \geq \alpha q \sqrt{2\bar{m}nk} \cdot r$ or $\|{\e}_{j}\| \geq \alpha q \sqrt{2\bar{m}nk} \cdot r$ for $j = 1,2 $, output $\bot$.

\item Let ${\V}= {\b}-{\e} \mod 2q$, parsed as ${\V}=({\V}_{0}, {\V}_{1}, {\V}_{2}) \in{\mathbb{Z} _{2q}^{\bar{m}+{nk}} }\times{\mathbb{Z}_{2q}^{nk}}\times{\mathbb{Z} _{2q}^{nk}}$, where ${\V}_{0} = ({\overline \V}_{0} , {\V}_{0}')\in{\mathbb{Z} _{2q}^{\bar{m}}}\times{\mathbb{Z}_{2q}^{nk}}$. If ${\overline \V}_0 \notin 2\Lambda(\bar{{\A}}^{t})$, output $\bot$.

\item Output $encode^{-1}$(${\V}^t \left [ \begin{array}{c c  } {\R}_{i1} &  {\R}_{i2} \\ \mathbf{I} & \mathbf{0}\\ \mathbf{0} & \mathbf{I} \end{array} \right ] \mod 2q)  \in \{0,1\}^{nk}$  if it exists, otherwise output $\bot$.

\end{enumerate}

\paragraph{$\mathbf{ReKeyGen}(PP, sk_{id_{i}}, id_i, id_j):$} On input a public parameter $PP$, the secret key of $i$-th user $sk_{id_{i}}$ and identity of $j$-th user $id_j$, do:
\begin{enumerate}
\item Construct $ {\A}_i$ = $\left [ \begin{array}{c  |  c  | c} { \tilde{\A}}_{i} & {\A}_{i1} & {\A}_{i2}  \end{array} \right ]$, where ${\tilde{\A}}_{i} =\left [ \begin{array}{c  |  r} \bar{\A}  &  \bar{\A}' +$${\H}_{id{_i}}$$\G$$\end{array} \right ]$ and ${\A}_{i1}, {\A}_{i2}$ are same as in $\mathbf{Extract}$ algorithm .
%\item Let ${\A}_i^{'} = \left [ \begin{array}{c  |  c } { \tilde{\A}}_{i} & {\A}_{i1} \end{array} \right ]$. 
\item Construct  $ {\A}_j$ = $\left [ \begin{array}{c  |  c  | c} { \tilde{\A}}_{j} & {\A}_{j1} & {\A}_{j2}  \end{array} \right ]$, where  ${\tilde{\A}}_{j} =\left [ \begin{array}{c  |  r} \bar{\A}  &  \bar{\A}' +$${\H}_{id{_j}}$$\G$$\end{array} \right ]$ and ${\A}_{j1}, {\A}_{j2}$ are same as in $\mathbf{Extract}$ algorithm .

\item Using $\mathbf{Sample}^{\mathcal{O}}$ with trapdoor ${\R}_{i1}$(from the secret key of $i$th user ), with tag ${\H}_1$, we sample  from the cosets which are formed with the column of the matrix $ \bar{\A}' +$${\H}_{id{_j}}\G$. After sampling $nk $ times we get an $(\bar{m}+2nk)\times nk$ matrix and parse it as three matrices ${\X}_{00}$  $ \in\mathbb{Z}^{\bar{m} \times nk}$, ${\X}_{10}$  $ \in\mathbb{Z}^{nk \times nk}$ and ${\X}_{20}$  $ \in\mathbb{Z}^{nk \times nk}$ matrices with Gaussian entries of parameter $s$.
So, $$ \left [ \begin{array}{c  |  c }  {\tilde{\A}}_{i} &  -{\tilde{\A}}_{i} {\R}_{i1} +{\H}_1\G \end{array} \right ] \left [ \begin{array}{c } {\X}_{00} \\ {\X}_{10} \\{\X}_{20} \end{array} \right ] = \bar{\A}' +{\H}_{id{_j}}{\G}, ~~i.e. \left [ \begin{array}{c  |  c }  {\tilde{\A}}_{i} & {\A}_{i1}  \end{array} \right ] \left [ \begin{array}{c } {\X}_{00} \\ {\X}_{10} \\{\X}_{20}\end{array} \right ] =  \bar{\A}' +{\H}_{id{_j}}\G.$$
   
 \item Continue sampling for the cosets obtained from the columns of the matrix ${\A}_{j1}$ from $A_j$. 
 This time, we increase the Gaussian parameter of the resulting sampled matrix up to $s \sqrt{{\bar m}/2}$: 
$$ \left [ \begin{array}{c  |  c }  {\tilde{\A}}_{i} &  -{\tilde{\A}}_{i} {\R}_{i1} +{\H}_1 \G\end{array} \right ] \left [ \begin{array}{c } {\X}_{01} \\ {\X}_{11}\\{\X}_{21} \end{array} \right ] =  \begin{array}{c }  {\A}_{j1} \end{array}, ~~i.e.\left [ \begin{array}{c  |  c }  {\tilde{\A}}_{i} & {\A}_{i1}  \end{array} \right ] \left [ \begin{array}{c } {\X}_{01} \\ {\X}_{11}\\ {\X}_{21}\end{array} \right ] =  \begin{array}{c }  {\A}_{j1}\end{array}.$$

 For the last sampling, to get a correct re-encryption, we will use the cosets which are formed with the column of the matrix $\begin{array}{c } {\A}_{j2} + {\tilde{\A}}_{i} {{\R}_{i2} } -{\H}_2{\G } \end {array}$:
$$\left [ \begin{array}{c  |  c }  {\tilde{\A}}_{i}& -{\tilde{\A}}_{i}{\R}_{i1} + {\H}_1\G\end{array} \right ] \left [ \begin{array}{c } {\X}_{02} \\ {\X}_{12}\\{\X}_{22} \end{array} \right ] =  \begin{array}{c } {\A}_{j2}+{\tilde{\A}}_{i} {{\R}_{i2}} - {\H}_2{\G} \end{array},$$ %$$i.e. \left [ \begin{array}{c  |  c }  {\tilde{\A}}_{i} & {\A}_{i1}  \end{array} \right ] \left [ \begin{array}{c } {\X}_{02} \\ {\X}_{12}\\{\X}_{22} \end{array} \right ] = \begin{array}{c } {\A}_{j2}+{\tilde{\A}}_{i} {{\R}_{i2}} - {\H}_2{\G} \end{array}.$$

 where ${\X}_{01}, {\X}_{02} \in \mathbb{Z}^{\bar{m}\times nk}$, ${\X}_{11}, {\X}_{12}, {\X}_{21}, {\X}_{22} \in \mathbb{Z}^{nk \times nk}$ with entries distributed as Gaussian with parameter $s\sqrt{\bar{m}}$.
 
 \item Output re-encryption key $rk_{i\rightarrow j}= \left[\begin{array}{cccc} \mathbf{I}_{\bar{m} \times \bar{m}} & {\X}_{00} & {\X}_{01} & {\X}_{02} \\  \mathbf{0}  & {\X}_{10}& {\X}_{11}& {\X}_{12}\\ \mathbf{0} & {\X}_{20} & {\X}_{21} & {\X}_{22}\\ \mathbf{0} & \mathbf{0} &  \mathbf{0} & \mathbf{I}_{nk \times nk}\end{array}\right] \in \mathbb{Z}^{(m+2nk)\times (m+2nk)}$,
  
which satisfies: ${\A}_i   \cdot rk_{i\rightarrow j} ={\A}_j$.

 % $$i.e. \left [ \begin{array}{c  |  c | c } {\tilde{\A}}_{i} &  -{\tilde{\A}}_{i}{\R}_{i1} + \G & -{\tilde{\A}}_{i}{\R}_{i2} +{\H}_{id{_i}}\G \end{array} \right] \left[\begin{array}{ccc} {\X}_{00} & {\X}_{01} & {\X}_{02} \\ {\X}_{10}&{\X}_{11}& {\X}_{12}\\  \mathbf{0} &  \mathbf{0} & \mathbf{I}\end{array}\right] = \left [ \begin{array}{c  |  c | c } {\tilde{\A}}_{j} &  -{\tilde{\A}}_{j}{\R}_{j1}  +  {\G}  &-{\tilde{\A}}_{j}{\R}_{j2}  +{\H}_{id{_j}}{\G}\end{array} \right].$$

\end{enumerate}

\paragraph{$\mathbf{ReEnc}( rk_{i\rightarrow j}, ct):$} On input $rk_{i\rightarrow j}$ and $i$-th user's ciphertext $ct$, Compute:\\
 ${{\b}'}^t = {\b}^t \cdot rk_{i\rightarrow j} =  2{\bf{s}} ^t\left [ \begin{array}{c  |  c  | c} { \tilde{\A}}_{j} & {\A}_{j1} & {\A}_{j2}  \end{array} \right ]+   \tilde{\e} ^t + ( \mathbf{0},  \mathbf{0}, encode ({\bf{m}}) ^t)$, where $ \tilde{\e} =  (\tilde{\e}_{0}, \tilde{\e}_{1},\tilde{\e}_{2})$, $\tilde{\e}_{0} = ({\tilde{\bar{\e}}}_{0} , {\tilde{\e}}_{0}') $ and ${\tilde{\bar{\e}}}_{0} = {\bar{\e}}_{0} $, $ { \tilde {\e}}'_{0} = \bar{\e}_{0} {\X}_{00} +{\e}'_{0} {\X}_{10}+ {\e}_{1} {\X}_{20}$, $\tilde {\e}_{1} =  \bar{\e}_{0} {\X}_{01} +{\e}'_{0} {\X}_{11}+ {\e}_{1} {\X}_{21}$, 
$\tilde {\e}_{2} =  \bar{\e}_{0} {\X}_{02} +{\e}'_{0} {\X}_{12}+ {\e}_{1} {\X}_{22} + {\e}_2$.\\
Then output $ ct' = {\b}'$.

\subsection{Correctness and Security}
In this section, we analyze the correctness and security of the proposed scheme. %We have proved security of the construction in the selective identity based model, according to the Definition \ref{def:security}, against chosen plaintext attack in Theorem \ref{thm:security}.
\begin{theorem} [Correctness] 
\label{correctness}
The $\IBPRE$ scheme with parameters proposed in Section~\ref{sec:construction}  is correct.
\end{theorem}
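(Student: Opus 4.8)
The plan is to verify the two correctness conditions of the single-hop IB-uPRE definition in turn — first decryption of fresh ciphertexts, then decryption of re-encrypted ciphertexts — by tracking the noise term through each algorithm and checking that (i) the relevant $\mathbf{Invert}^{\mathcal{O}}$ call recovers the noise exactly, and (ii) after clearing the mask $2(\mathbf{s}^t\mathbf{A}_i)$ the residue modulo $2$ decodes to $\mathbf{m}$ via $encode^{-1}$. Throughout I would use that, by construction of $\mathbf{Extract}$ via $\mathbf{DelTrap}^{\mathcal{O}}$, the secret key $sk_{id_i}=[\mathbf{R}_{i1}\mid\mathbf{R}_{i2}]$ is a $\mathbf{G}$-trapdoor for $\mathbf{A}_i=[\tilde{\mathbf{A}}_i\mid\mathbf{A}_{i1}\mid\mathbf{A}_{i2}]$ with invertible tag $\mathbf{H}_2$ (this is exactly the matrix identity displayed at the end of $\mathbf{Extract}$, reading off the $\mathbf{R}_{i2}$/$\mathbf{H}_2\mathbf{G}$ block), so Theorem 5.4 of \cite{MP12} applies to the $\mathbf{Invert}$ call in $\mathbf{Dec}$.

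For fresh ciphertexts: I would first argue the noise bounds in step 3 of $\mathbf{Dec}$ hold with overwhelming probability. We have $\bar{\mathbf{e}}_0\leftarrow D_{\mathbb{Z},\alpha q}^{\bar m}$, so $\|\bar{\mathbf{e}}_0\|<\alpha q\sqrt{\bar m}$ except with negligible probability by a standard Gaussian tail bound; and conditioned on that event, $s'^2=(\|\bar{\mathbf{e}}_0\|^2+\bar m(\alpha q)^2)r^2\le 2\bar m(\alpha q)^2 r^2$, so $\mathbf{e}_0',\mathbf{e}_1,\mathbf{e}_2\leftarrow D_{\mathbb{Z},s'}^{nk}$ satisfy $\|\cdot\|<\alpha q\sqrt{2\bar m\,nk}\cdot r$, again by a tail bound. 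Hence the honest ciphertext passes steps 1 and 3 and is not rejected. Next, the condition $\mathbf{e}\in\mathcal{P}_{1/2}(q\cdot\mathbf{B}^{-t})$ required for $\mathbf{Invert}^{\mathcal{O}}$ to succeed: the entries of $\mathbf{e}$ are bounded (via the above) by roughly $O(nk)\cdot r^2\cdot\alpha q\ll q$ by the choice $1/\alpha=O(nk)^3 r^3$, while $\mathbf{B}$ is a fixed basis of $\Lambda^\perp(\mathbf{G})$ with $\|\mathbf{B}^{-t}\|$ a small constant, so this containment holds; thus $\mathbf{Invert}$ returns the true $(\mathbf{s},\mathbf{e})$ and step 2 recovers $\mathbf{e}$ exactly. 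Then $\mathbf{V}=\mathbf{b}-\mathbf{e}=2(\mathbf{s}^t\mathbf{A}_i\bmod q)+(\mathbf{0},\mathbf{0},encode(\mathbf{m})^t)\bmod 2q$, so $\overline{\mathbf{V}}_0\equiv 2\,\overline{(\mathbf{s}^t\bar{\mathbf{A}})}\in 2\Lambda(\bar{\mathbf{A}}^t)$ and step 4 passes; finally multiplying $\mathbf{V}^t$ by $\bigl[\begin{smallmatrix}\mathbf{R}_{i1}&\mathbf{R}_{i2}\\\mathbf{I}&\mathbf{0}\\\mathbf{0}&\mathbf{I}\end{smallmatrix}\bigr]$ and using $\mathbf{A}_i\bigl[\begin{smallmatrix}\mathbf{R}_{i1}&\mathbf{R}_{i2}\\\mathbf{I}&\mathbf{0}\\\mathbf{0}&\mathbf{I}\end{smallmatrix}\bigr]=[\mathbf{H}_1\mathbf{G}\mid\mathbf{H}_2\mathbf{G}]$, the $2(\mathbf{s}^t\mathbf{A}_i)$ term contributes $2\mathbf{s}^t[\mathbf{H}_1\mathbf{G}\mid\mathbf{H}_2\mathbf{G}]\equiv\mathbf{0}\bmod 2$, leaving $encode(\mathbf{m})^t\bmod 2q$ in the appropriate block, which $encode^{-1}$ inverts to $\mathbf{m}$ because $encode$ is a bijection onto cosets of $\Lambda/2\Lambda$. (One must check the $\mathbf{e}^t$ contribution to this product does not disturb the coset mod $2$; this is handled exactly as in \cite[Sec.~6]{MP12}, using that the decoding is really over $\Lambda/2\Lambda$ with the $\bot$-checks in steps 3–4 guaranteeing $\mathbf{e}$ lies in the right fundamental domain.)

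For re-encrypted ciphertexts: from $\mathbf{ReKeyGen}$ we have the key identity $\mathbf{A}_i\cdot rk_{i\to j}=\mathbf{A}_j$ — I would verify this by block multiplication, checking each of the four column-blocks of $rk_{i\to j}$ against the three sampling equations of steps 3–5 (the first block is immediate from $\tilde{\mathbf{A}}_i\cdot[\mathbf{I};\mathbf{0};\mathbf{0};\ldots]$, the $\mathbf{X}_{\cdot 0}$ block gives $\bar{\mathbf{A}}'+\mathbf{H}_{id_j}\mathbf{G}$, the $\mathbf{X}_{\cdot 1}$ block gives $\mathbf{A}_{j1}$, and the $\mathbf{X}_{\cdot 2}$ block combined with the trailing $\mathbf{I}_{nk}$ acting on $\mathbf{A}_{i2}$ gives $\mathbf{A}_{j2}+\tilde{\mathbf{A}}_i\mathbf{R}_{i2}-\mathbf{H}_2\mathbf{G}+(\mathbf{A}_{i2})=\mathbf{A}_{j2}$ using $\tilde{\mathbf{A}}_i\mathbf{R}_{i2}=\mathbf{H}_2\mathbf{G}-\mathbf{A}_{i2}$, where I must be careful about exactly which columns the trapdoor-$\mathbf{R}_{i2}$ correction is attached to — this bookkeeping is the one genuinely fiddly point). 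Consequently $\mathbf{b}'^t=\mathbf{b}^t\cdot rk_{i\to j}=2(\mathbf{s}^t\mathbf{A}_j\bmod q)+\tilde{\mathbf{e}}^t+(\mathbf{0},\mathbf{0},encode(\mathbf{m})^t)$ with $\tilde{\mathbf{e}}$ as given in $\mathbf{ReEnc}$ — note $rk_{i\to j}$ has integer entries so the $2q$ structure is preserved and the $encode(\mathbf{m})$ block is untouched because its column-block of $rk_{i\to j}$ is $[\mathbf{X}_{02};\mathbf{X}_{12};\mathbf{X}_{22};\mathbf{I}]$ and only the identity acts on it (this is the ``trick'' the introduction alludes to). It then suffices to bound $\|\tilde{\mathbf{e}}\|$: each component is a sum of at most three products of an $O(\alpha q\sqrt{\bar m\,nk}\cdot r)$-norm error vector with an $nk$-by-$nk$ (or $\bar m$-by-$nk$) matrix whose entries are Gaussian of parameter $s$ or $s\sqrt{\bar m}$, so $\|\tilde{\mathbf{e}}\|=O(nk)^{?}\cdot r^{?}\cdot\alpha q$ — I would pin down the exponents and check they are dominated by the slack $1/\alpha=O(nk)^3 r^3$, so that $\tilde{\mathbf{e}}$ still lands in $\mathcal{P}_{1/2}(q\mathbf{B}^{-t})$ and passes the $\mathbf{Dec}$ bound-checks (now run with $sk_{id_j}$ and tag $\mathbf{H}_2$). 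Given that, decryption of $ct'$ under $sk_{id_j}$ proceeds verbatim as in the fresh case and outputs $\mathbf{m}$.

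The main obstacle I anticipate is the noise-growth accounting for $\tilde{\mathbf{e}}$ in the re-encryption branch: one must control the operator norms of the sampled blocks $\mathbf{X}_{ab}$ (which are Gaussian of parameter up to $s\sqrt{\bar m}$, with $s=O(nk)\cdot r$ roughly, since $\mathbf{DelTrap}$/$\mathbf{Sample}$ require $s\ge\eta_\epsilon$ of the relevant lattice which is $O(\sqrt{nk})\cdot r$ times a small constant) using standard bounds on singular values of discrete Gaussian matrices (e.g. $\|\mathbf{X}\|\le O(s\sqrt{nk})$ w.h.p.), then multiply through and confirm the final error rate is consistent with the single stated constraint $1/\alpha=O(nk)^3 r^3$. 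Everything else is either a direct matrix identity or an invocation of \cite[Thm.~5.4]{MP12} and the Gaussian tail/singular-value bounds, so the proof is essentially a careful but routine chase once the norm bookkeeping is set up.
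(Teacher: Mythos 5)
Your proposal follows essentially the same route as the paper: the paper's own proof simply invokes Lemma 6.2 of \cite{MP12} for fresh ciphertexts and the error-growth analysis of Lemma 15 of \cite{KIRSH14} for re-encrypted ones (omitting all calculations), and your outline is precisely an unpacking of those two arguments --- the trapdoor-inversion/coset-decoding step for $\mathbf{Enc}/\mathbf{Dec}$ and the key identity ${\A}_i\cdot rk_{i\rightarrow j}={\A}_j$ plus Gaussian singular-value bounds on the ${\X}_{ab}$ blocks for $\mathbf{ReEnc}$. The norm bookkeeping you flag (including whether the re-encrypted error still passes the bound checks in step 3 of $\mathbf{Dec}$) is exactly the part the paper defers to \cite{KIRSH14} ``with some modifications,'' so your plan is consistent with, and more explicit than, the published proof.
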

\begin{proof}
To show that the decryption algorithm outputs a correct plaintext, it is required to consider both original and re-encrypted ciphertext. The arguments for the original ciphertext follows from the Lemma 6.2 of \cite{MP12}. For re-encrypted ciphertext, the main point is to consider the growth of error due to re-encryption. Argument for the controlled growth of error of  re-encrypted ciphertext follows, with some modifications, from Lemma 15 of \cite{KIRSH14}. Details calculations are omitted due to space constrained.
\end{proof}

\begin{theorem}[Security]
\label{thm:security}
The above scheme is $\INDsCPA$ secure assuming the hardness of decision-{\em LWE}$_{q, \alpha '}$ for $ \alpha' =\alpha /3 \geq 2 \sqrt{n} /q.$
\end{theorem}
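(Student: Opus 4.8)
The plan is to follow the standard hybrid-game strategy for selective-identity LWE-based IBE proofs as in Agrawal--Boneh--Boyen and Micciancio--Peikert, adapted to handle the extra matrices $\A_1,\A_2$, the three tags $\H_1,\H_2$ attached to user keys, and the $\mathcal{O}^{\bf{ReKeyGen}}$ and $\mathcal{O}^{\bf{ReEnc}}$ oracles. In the selective model $\mathcal{A}$ announces the target identity $id^*$ before $\bf{SetUp}$, so the reduction can embed the target structure. The key device is the simulation of public parameters: instead of choosing $\bar\A'=-\bar\A\R$ honestly, the reduction sets $\bar\A'=-\bar\A\R-\H_{id^*}\G$ (so that $\tilde\A_{id^*}=[\bar\A \mid -\bar\A\R]$ has tag $\mathbf{0}$ and is useless for inversion by the adversary but usable by the simulator via $\R$), and likewise sets $\A_1=-\bar\A\R_1-\H_3\H_{id^*}\G$, $\A_2=-\bar\A\R_2-\H_4\H_{id^*}\G$ for fresh Gaussian $\R_1,\R_2$, so that $\A_{i1},\A_{i2}$ for any $id_i$ carry tag $(\H_{id_i}-\H_{id^*})$ scaled by $\H_3,\H_4$, which is invertible by the FRD property exactly when $id_i\neq id^*$. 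This is what lets the reduction answer $\mathcal{O}^{\bf{Extract}}$ and the delegated-trapdoor queries for every corrupted $id\neq id^*$ while having no trapdoor at $id^*$.

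\medskip
\noindent\textbf{Game sequence.} I would proceed through a short chain of games. Game $0$ is the real $\Exp_{\mathcal{A}}^{\INDsCPA}$. In Game $1$ the challenger generates $PP$ in the ``punctured'' way described above using the announced $id^*$; by the near-uniformity of $(\bar\A,\bar\A\R)$ (the leftover-hash/Regev-style argument, the $\bar m=O(nk)$, $\mathcal D=D^{\bar m\times nk}_{\mathbb Z,r}$ choice) this is within $\negl(n)$ statistical distance of Game $0$, and all oracle answers are produced using $\R$, $\R_1$, $\R_2$ together with $\bf{DelTrap}^{\mathcal O}$, $\bf{Sample}^{\mathcal O}$; one checks the output distributions of extracted keys and re-encryption keys match the real ones (here the requirement $s\ge\eta_\epsilon(\Lambda^\perp(\tilde\A_i))$ and the inflated Gaussian parameters $s\sqrt{\bar m}$ in $\bf{ReKeyGen}$ are what make the simulated $\R_{i1},\R_{i2},\X_{\bullet\bullet}$ statistically correct). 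In Game $2$ the challenge ciphertext is formed from an LWE sample: the reduction receives $\b_{\mathrm{lwe}}^t=\s^t\bar\A+\bar\e_0^t$ (either a real LWE sample at rate $\alpha'=\alpha/3$ or uniform), sets $\bar\b_0=2\b_{\mathrm{lwe}}$, derives $\b_0',\b_1,\b_2$ by multiplying through the known Gaussian matrices $\R,\R_1,\R_2$ (exactly mirroring how $\bar\A',\A_1,\A_2$ were built) and re-randomising with the fresh Gaussians $\e_0',\e_1,\e_2$ with the prescribed $s'^2=(\|\bar\e_0\|^2+\bar m(\alpha q)^2)r^2$, then adds $(\mathbf 0,\mathbf 0,encode(\m_b)^t)$. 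A noise-smudging argument (the reason for the $s'$ formula and the $1/\alpha=O(nk)^3r^3$ constraint) shows that when $\b_{\mathrm{lwe}}$ is a genuine LWE sample this ciphertext is within $\negl(n)$ of a real encryption of $\m_b$. Finally, when $\b_{\mathrm{lwe}}$ is uniform, $2\b_{\mathrm{lwe}}\bmod 2q$ is uniform on $\ZZ_{2q}$ and propagates to make $(\b_0,\b_1,\b_2)$ statistically independent of $b$, so $\mathcal{A}$'s advantage is $0$; hence any non-negligible advantage in Game $0$ breaks decision-LWE$_{q,\alpha'}$.

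\medskip
\noindent\textbf{Main obstacle.} The delicate point is consistency of the re-encryption oracle and the re-encryption-key oracle with the punctured setup. For $\mathcal{O}^{\bf{ReKeyGen}}$ the restriction in the security game forbids the only dangerous query ($id_i\in HU\setminus\{id^*\}$ with $id_j\in HU$, plus any query out of $id^*$), so every allowed query has $id_i\neq id^*$ and the simulator holds a usable trapdoor $\R_{i1}$ for $\tilde\A_i$ (with invertible tag, by FRD) and can run $\bf{Sample}^{\mathcal O}$ exactly as the real $\bf{ReKeyGen}$; one must still verify that the sampled $\X_{00},\X_{01},\X_{02},\dots$ have the same (discrete Gaussian, parameters $s$ then $s\sqrt{\bar m}$) distribution in both games and that $\A_i\cdot rk_{i\to j}=\A_j$ holds identically. $\mathcal{O}^{\bf{ReEnc}}$ can be answered by simply applying a freshly generated $rk_{i\to j}$ (its output distribution depends only on $rk_{i\to j}$ and the input ciphertext), with the same case analysis. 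The remaining care is the smudging bound: one must ensure the Gaussian noise added in the simulated challenge (width $s'$, and after the re-encryption-key multiplications widths blown up by factors $O(nk)\cdot r$) statistically absorbs the LWE error and the $\R$-induced cross terms $\bar\e_0\X_{\bullet\bullet}$, which is precisely what the chosen parameter regime $1/\alpha=O(nk)^3r^3$ and $\alpha'=\alpha/3\ge 2\sqrt n/q$ guarantee. I would present the game chain in detail and relegate the Gaussian-tail and statistical-distance estimates to the parameter lemmas already invoked for correctness.
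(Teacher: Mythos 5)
Your proposal is correct and follows essentially the same route as the paper: puncture the public parameters at the selectively-announced $id^*$ so that all tags $(\H_{id_i}-\H_{id^*})$ are invertible by FRD except at the target, answer $\mathcal{O}^{\bf{Extract}}$/$\mathcal{O}^{\bf{ReKeyGen}}$/$\mathcal{O}^{\bf{ReEnc}}$ via the delegated trapdoors under the game's restrictions, build the challenge from the (lifted) LWE sample by multiplying through $\R,\R_{i^*1},\R_{i^*2}$, and finish with the leftover-hash lemma when the sample is uniform. The only deviations are cosmetic (you embed $\A_1,\A_2$ as $-\bar\A\R_1$ rather than the paper's $-\tilde\A_{i^*}\R_{i^*1}$, merge the paper's Games 2 and 3, and gloss the randomized $\mathbb{Z}_q\to\mathbb{Z}_{2q}$ sample conversion), none of which changes the argument.
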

\begin{proof}
First, using the same technique in \cite{MP12}, we transform the samples from LWE distribution to what we will need below. Given access to an LWE distribution ${\A}_{s,\alpha '}$ over $\mathbb{Z}_q^{n} \times \mathbb{T}$, (where$ \mathbb{T =R/Z}$) for any ${\bf {s}} \in \mathbb{Z}_q^{n} $, we can transform its samples $({\bf{a}}, b=\langle {\bf{s}},{\bf{a}}\rangle / q + e \mod 1)$ to have the form $({\bf{a}}, 2(\langle {\bf{s}},{\bf{a}}\rangle \mod q )+ e ' \mod 2q)$ for $ e' \leftarrow D _{\mathbb{Z},\alpha q}$, by mapping $b \mapsto 2qb + D_{\mathbb{Z}-2qb,s} \mod 2q$, where $s^{2} = (\alpha q)^{2}-(2 \alpha ' q)^{2} \geq 4n \geq \eta _\epsilon (\mathbb{Z}) ^{2}$, $\eta _\epsilon$ is smoothing parameter \cite{MR04,MP12}.  This transformation maps the uniform distribution over $\mathbb{Z}_q^{n} \times \mathbb{T} $ to the uniform distribution $\mathbb{Z}_q^{n} \times \mathbb{Z}_{2q}$. Once the LWE samples are of the desired form, we construct column-wise matrix $\A^{*}$ from these samples ${\bf{a}}$ and a vector ${\bf{b}^{*}}$ from the corresponding b. Let $id_{i^*}$ be the target user. The proof proceeds in a sequence of games.

\noindent{\bf Game 0:} This is the original $\INDsCPA$ game from definition between an attacker  $\mathcal{A}$ against scheme and an $\INDsCPA$ challenger.

\noindent{\bf Game 1:} In $\bf Game 1$ we change the way that the challenger generates $\bar{\A}, \bar {\A}', {\A}_1, {\A}_2$ in the public parameters. In $\bf{SetUp}$ phase, do as follows:
	\begin{itemize}
	 \item Set the public parameter $ \bar {\A} = \A^{*}$, where $\A^{*}$ is from LWE instance $(\A^{*}, {\b}^*)$ and set $\bar{\A}' =  -{\A}^{*}{\R} - {\H}_{id_{i^*}}\G$, where $\R$ is chosen in the same way as in ${\bf Game~0}$. 
	% Due to left-over hash lemma \cite[Lemma 14]{ABB10-EuroCrypt}, $({\A}^{*}, -{\A}^{*}\R)$ is statistically indistinguishable with uniform distribution. Hence, $({\A}^{*}, -{\A}^{*}{\R}- {\H}_{id_{i^*}}\G)$ is statistically indistinguishable with uniform distribution. 
	 \item Choose four invertible matrices ${\H}_1, {\H}_2, {\H}_3, {\H}_4$ at uniformly random from $\mathbb{Z}_q^{n\times n}$. 
	% \item Set ${\A}' = \left [ \begin{array}{c  |  r} {\A}^{*}  & -{\A}^{*}{\R} \end{array} \right ] $ and
	 \item Choose ${\R}_{{i^*}1},{\R}_{{i^*}2 }\leftarrow \mathcal{D}= D_{\mathbb{Z}, r}^{m \times nk }$; Set ${\A}'_1 = - \left [ \begin{array}{c  |  r} {\A}^{*}  & -{\A}^{*}{\R} \end{array} \right ] \cdot {\R}_{{i^*}1}$ and \\${\A}'_2 =  -\left [ \begin{array}{c  |  r} {\A}^{*}  & -{\A}^{*}{\R} \end{array} \right ] \cdot {\R}_{{i^*}2}$; Construct ${\A}_1 = {\A}'_1 -  {\H}_3 {\H}_{id_{i^*}}{\G}$ and ${\A}_2 = {\A}'_2 -  {\H}_4{\H}_{id_{i^*}}{\G}$.
\item Set $PP = (\bar{\A}, \bar {\A}', {\A}_1, {\A}_2, {\H}_1, {\H}_2, {\H}_3, {\H}_4,  \G)$ and send it to the Adversary $\mathcal{A}$.
\end{itemize}	
	To answer a secret key query against $id_i\in CU$, challenger will do the following: Construct	${\tilde{\A}}_{i} =
	\left [ \begin{array}{c|r} 
	{\A}^*  &  -{\A}^* {\R} -{\H}_{id_{i^*}}{\G}+ {\H}_{id{_i}}{\G} \end{array} \right ]  
	= \left [ \begin{array}{c  |  r}{ \A}^* &  -{\A}^* {\R} +({\H}_{id{_i}}-{\H}_{id_{i^*}})\G
\end{array} \right ].$
So, $\R$ is a trapdoor of ${\tilde \A_{i}}$ with invertible tag $({\H}_{id{_i}}-{\H}_{id_{i^*}})$. Then using $\mathbf{Extract}$ algorithm, challenger gets the secret key $sk_{id_{i}}= \left [ \begin{array}{c  |  c  }{ \R}_{i1} & { \R}_{i2} \end{array} \right ]$ for $id_i$, sends $sk_{id_{i}}$ to the adversary $\mathcal{A}$.\\ 
Challenger will send $\bot$, against the secret key query for $id_i\in HU$.

Note that  for $id_{i^*}$, $\tilde{\A}_{i^*} = \left [ \begin{array}{c  |  r} {\A}^*  &  -{\A}^* {\R}\end{array} \right ]$, so ${\A}'_1 =-{\tilde{\A}}_{i^*}{\R}_{{i^*}1}, {\A}'_2 = -{ \tilde{\A}}_{i^*} {\R}_{{i^*}2}$ and \\${\A}_{i^*} = \left [ \begin{array}{c  |  c  | c} { \tilde{\A}}_{i^*} & {\A}_{{i^*}1} & {\A}_{{i^*}2}  \end{array} \right ] = \left [ \begin{array}{c  |  c  | c} { \tilde{\A}}_{i^*} & {\A}_1 + {\H}_3 {\H}_{id_{i^*}}{\G} & {\A}_2 +  {\H}_4{\H}_{id_{i^*}}{\G}\end{array} \right ] = \left [ \begin{array}{c  |  c  | c} { \tilde{\A}}_{i^*} & {\A}'_1 & {\A}'_2 \end{array} \right ] =  \\ \left [ \begin{array}{c  |  c  | c} { \tilde{\A}}_{i^*} & - {\tilde{\A}}_{i^*}{\R}_{{i^*}1} & -{ \tilde{\A}}_{i^*} {\R}_{{i^*}2}\end{array} \right ]$

For the re-encryption key query and re-encryption query, challenger maintain the restrictions as in definition \ref{def:security} and computes $rk_{i \rightarrow j}$, $\mathbf{ReEnc}(rk_{i \rightarrow j}, ct)$ according to the $\mathbf{ReKeyGen}$ and $\mathbf{ReEnc}$ algorithms to reply the adversary. 
Due to left-over hash lemma \cite[Lemma 14]{ABB10-EuroCrypt}, $({\A}^{*}, -{\A}^{*}{\R}, -\left [ \begin{array}{c  |  r} {\A}^{*}  & -{\A}^{*}{\R} \end{array} \right ] \cdot {\R}_{{i^*}1}, -\left [ \begin{array}{c  |  r} {\A}^{*}  & -{\A}^{*}{\R} \end{array} \right ] \cdot {\R}_{{i^*}2})$ is statistically indistinguishable with uniform distribution. Hence, $({\A}^{*}, -{\A}^{*}{\R}- {\H}_{id_{i^*}}{\G}, - \left [ \begin{array}{c  |  r} {\A}^{*}  & -{\A}^{*}{\R} \end{array} \right ] \cdot {\R}_{{i^*}1}- {\H}_3{\H}_{id_{i^*}}{\G}, -\left [ \begin{array}{c  |  r} {\A}^{*}  & -{\A}^{*}{\R} \end{array} \right ] \cdot {\R}_{{i^*}2}-{\H}_4{\H}_{id_{i^*}}{\G})$ is statistically indistinguishable with uniform distribution. Since $\bar{\A}, \bar {\A}', {\A}_1, {\A}_2$ and responses to key queries are statistically close to those in ${\bf Game~0}$, ${\bf Game ~0}$ and ${\bf Game ~1}$ are statistically indistinguishable.

\noindent{\bf Game 2:} In $\bf Game 2$ we change the way that the challenger generates challenge ciphertext. Here Challenger will produce the challenge ciphertext $\b$ on a message ${\bf{m}}\in \{0,1\}^{nk}$ for $id_{i^*}$ as follows: Choose ${\bf{s}} \leftarrow \mathbb{Z}_q^{n}$ and $\bar{\e} _0 \leftarrow D_{\mathbb{Z},\alpha q}^{\bar{m}}$ as usual, but do not choose ${\e}_0', {\e}_1, {\e}_2$.
% Note that  ${\A}_{i^*}$ = $\left [ \begin{array}{c  |  c  | c}  {\tilde{\A}}_{i^{*}} &  {\tilde{\A}}_{i^{*}1}  &  {\tilde{\A}} _{i^{*}2} \end{array} \right ]$.
Let ${\bar{\b}_{0}}^t = 2({\bf{s}}^t {\A ^{*}} \mod q)+ \bar{\e}_0^t  \mod 2q$ and ${{\b}_{0}'}^t = -{\bar{\b}_{0}}^t {\R} + {\hat \e}_{0} ^t \mod 2q$, where ${\hat\e}_0 \leftarrow D_{\mathbb{Z},s'}^{nk}$. So, ${\b}_0 = ({\bar{\b}_{0}}, {{\b}_{0}'})$. The last $2nk$ coordinates can be set as ${\b}_{1}^t =  -{\b}_0 ^t {\R}_{{{i}^*} 1}+ {\hat \e}_{1}^t \mod 2q$;
${\b}_{2}^t = -{\b}_0 ^t {\R}_{{{i}^*} 2}+ {\hat \e}_{2}^t+ encode ({\bf{m}})\mod 2q$, where ${\hat \e}_1, {\hat \e}_2 \leftarrow D_{\mathbb{Z},s'}^{nk}$. Finally, replace ${\bar{\b}_{0}}$ with $\b^{*}$ in all the above expression, where $({\A }^{*}, \b^{*})$ is the LWE instance. Therefore, ${\bar{{\b}_{0}}}^t = {{\b}^{*}}^t$;
${{\b}_{0}'}^t =- {{\b}^{*t}} {\R} + {\hat \e}_{0} ^t \mod 2q$;
${\b}_{1}^t =  -{{\b}_{0}^{*} }^t {\R}_{{{i}^*} 1}+ {\hat \e}_{1}^t \mod 2q$;
${\b}_{2}^t = -{{\b}_{0}^{*}}^t {\R}_{{{i}^*} 2}+ {\hat \e}_{2}^t+ encode ({\bf{m}})\mod 2q$. Set  ${{\b}_{0}^{*}}^t = ({{\b}^{*}}^t,-{{\b}^{{*}t}} {\R} + {\hat\e}_{0}^t \mod 2q)$. Then the challenger output the challenge ciphertext $ct= {\b} = ({{\b}_{0}^{*}}, {\b}_1, {\b}_2)$.

We now show that the distribution of $\b$ is within $\negl$ statistical distance of that in ${\bf Game ~1}$ from the adversary's view. Clearly, ${\b}^{*}$  have essentially the same distribution as in ${\bf Game ~0}$ by construction.
By substitution we have: 
${{\b}_{0}'}^t  = 2({\bf{s}}^t (-{\A ^{*}} {\R}) \mod q)+  \bar{\e}_0^t {\R} +  {\hat \e}_{0}^t \mod 2q$;
${\b}_{1}^t   = 2({\bf{s}}^t(-{\tilde{\A}}_{i^{*}}{\R}_{{{i}^*} 1}) \mod q) + (  \bar{\e}_0^t ,  \bar{\e}_0^t {\R}+ \hat{\e}_{0}^t) {\R}_{{{i}^*} 1} + {\hat\e}_{1}^t \mod 2q$;
${\b}_{2}^t =2({\bf{s}}^t(-{\tilde{\A}}_{i^{*}}{\R}_{{{i}^*} 2}) \mod q)+ (  \bar{\e}_0^t ,  \bar{\e}_0^t {\R}+ \hat{\e}_{0}^t) {\R}_{{{i}^*} 2}+ \hat{\e}_{2}^t +  encode ({\bf{m}}) \mod 2q.$

By Corollary 3.10 in \cite{Regev05}, the noise term $\bar{\e}_0^t {\R}+ \hat{\e}_{0}^t$ of ${\b}_{0}' $ is within $ \negl$ statistical distance from discrete Gaussian distribution $D_{\mathbb{Z},s'}^{nk}$. The same argument, also, applies for the noise term of ${\b}_1, {\b}_2$. Hence, ${\bf Game ~1}$ and ${\bf Game ~2}$ are statistically indistinguishable. 

\noindent{\bf Game 3:} Here, we only change how the ${\b}^{*}$ component of the challenge ciphertext is created, letting it be uniformly random in $\mathbb{Z}_{2q} ^{\bar m}$. Challenger construct the public parameters, answer the secret key queries, re-encryption queries and construct the last $3nk$ coordinates of challenge ciphertext exactly as in Game~2. It follows from the hardness of the decisional LWE$_{q, \alpha'}$ that ${\bf Game ~2}$ and ${\bf Game~3}$ are computationally indistinguishable.

Now, by the left-over hash lemma \cite[Lemma 14]{ABB10-EuroCrypt}, (${ \A}^*,{\b}^{*}, -{ \A}^{*}{\R},{{\b}^{*t}} {\R}, -\tilde{\A}_{i^{*}} {\R }_{{i^*}1},\\{{\b}_{0}^{*} }^t {\R}_{{{i}^*} 1},$  $-\tilde{\A}_{i^{*}} {\R }_{{i^*}2},$ ${{\b}_{0}^{*} }^t {\R}_{{{i}^*} 2}$) is $\negl$-uniform when ${ \R},  {\R}_{{{i}^*} 1}, {\R}_{{{i}^*} 2}$ are chosen as in Game 2. Therefore, the challenge ciphertext has the same distribution (up to  $\negl$ statistical distance) for any encrypted message. So, the advantage of the adversary against the proposed scheme is same as the advantage of the attacker against decisional LWE$_{q, \alpha'}$. %This completes the proof.
\end{proof}

\bibliography{latbib}
\bibliographystyle{splncs04}

\end{document}